\documentclass{llncs}


\input glyphtounicode
\pdfgentounicode=1

\usepackage[T1]{fontenc}
\usepackage{graphicx}
\usepackage{comment,array}
\usepackage{cite}
\usepackage{algorithmic}
\usepackage[linesnumbered,ruled,vlined]{algorithm2e}
\usepackage[latin1]{inputenc}
\usepackage[tight]{subfigure}
\usepackage{amssymb}
\usepackage{amsmath}
\usepackage{bbm}
\usepackage{microtype}
\usepackage[
    final,
    bookmarks=true,
    bookmarksnumbered=true,
    linktocpage,
    colorlinks=true,
    urlcolor=black,
    citecolor=black,
    linkcolor=black,
    filecolor=black,
    linktoc=all
]{hyperref}
\usepackage[all]{hypcap}

\usepackage{xspace}
\usepackage{acronym}

\acrodef{FIFO}{First-In, First-Out}
\acrodef{PAS}{Patience-Aware Scheduling}
\acrodef{EAS}{Expectation-Aware Scheduling}

\newcommand{\FIFO}{\ac{FIFO}\xspace}
\newcommand{\PaS}{\ac{PAS}\xspace}
\newcommand{\EaS}{\ac{EAS}\xspace}

\newcommand\blfootnote[1]{%
  \begingroup
  \renewcommand\thefootnote{}\footnote{#1}%
  \addtocounter{footnote}{-1}%
  \endgroup
}

\title{Patience-aware Scheduling for Cloud Services:\\ Freeing Users from the Chains of Boredom}

\author{}
\institute{Carlos Cardonha$^1$, Marcos D. Assun\c{c}\~{a}o$^1$,\\ Marco A. S. Netto$^1$, Renato L. F. Cunha$^1$, Carlos Queiroz$^2$\\
$^1$IBM Research Brazil\\
$^2$IBM Research Australia
}

\begin{document}

\maketitle

\blfootnote{The final publication is available at link.springer.com}

\begin{abstract}
Scheduling of service requests in Cloud computing has traditionally focused on the reduction of pre-service wait, generally termed as waiting time. Under certain conditions such as peak load, however, it is not always possible to give reasonable response times to all users. This work explores the fact that  different users may have their own levels of tolerance or patience with response delays. We introduce scheduling strategies that produce better assignment plans by prioritising requests from users who expect to receive the results earlier and by postponing servicing jobs from those who are more tolerant to response delays. Our analytical results show that the behaviour of users' patience plays a key role in the evaluation of scheduling techniques, and our computational evaluation demonstrates that, under peak load, the new algorithms typically provide better user experience than the traditional FIFO strategy.
\end{abstract}

\section{Introduction}
\label{sec:intro}

Job schedulers are key components of Clouds as they are responsible not only for assigning user tasks to resources but also for notifying management systems on when resources need to be allocated or released. These resource allocation decisions, specially on when to allocate additional resources, have an impact on both provider costs and user experience, and are particularly relevant to manage resources under peak loads.

Traditionally, job schedulers do not take into account how users interact with services. They optimise system metrics, such as resource utilisation and energy consumption, and user metrics such as response time. However, understanding interactions between users and a service provider over time allows for custom optimisations that bring benefits for both parties. Such interactions are becoming more pervasive due to the large number of users accessing Cloud services via mobile devices and analytics applications that require multiple service requests.

In this article we propose scheduling strategies that take into account users' expectations regarding response time and their patience when interacting with Cloud services. Such strategies are relevant mainly to handle peak load conditions without the need to allocate  additional resources for the service provider. Although elasticity is common in a Cloud setting, resources may not be available quickly enough and their allocation can incur additional costs that may be avoidable. The main contributions of this paper are:

\begin{itemize}
    \item The introduction of a \PaS strategy and an \EaS strategy for Cloud systems;
    \item Analytical comparisons between the \EaS  strategy and the traditional \FIFO scheduling strategy;
    \item Evaluation of the proposed strategies  and a detailed discussion on when they bring benefits for users and service providers.
\end{itemize}


\section{Proposed Scheduling Strategies}
\label{sec:scheduling}

This section describes the proposed strategies, \PaS and \EaS, and presents several analytical results that compare \EaS with \FIFO. We chose \FIFO for comparison because it is one of the most used scheduling techniques that explores fairness of users by scheduling requests as they arrive in the system.

\subsection{Common Characteristics Shared by \PaS and \EaS}

This work considers Cloud services (\emph{e.g.} data analytics, Web search, social networks) that back applications running on mobile devices and desktops, most of which are highly interactive and iterative. Users, consciously or not, interact with a service provider multiple times when using their applications. Service performance over time usually shapes the users' expectations on how it is likely to perform in the future. The service provider stores information on how its service responded to user requests and uses this information to gauge her expectations and patience. 

\PaS and~\EaS utilise user expectation to schedule service requests on the Cloud's resources. Both strategies  share the following common goals:
\begin{itemize}
    \item Minimise the number of users abandoning the service;
    \item Maximise the users'  level of happiness with the service;
    \item Perform such optimisations  without adding new resources to the service.
\end{itemize}
We remark that an incoming job request will be directly assigned if there are available resources
in the service provider. Therefore, choosing among~\FIFO, \PaS, and \EaS
becomes more crucial during peak load.

\subsection{Patience-aware Scheduling}
\label{sec:patienceaware}
\PaS{}  has the goal of serving first users whose patience levels are the lowest when interacting with the Cloud service. When new requests arrive, the algorithm sorts the tasks in its waiting queue according to the \emph{Patience} of their users (in ascending order), and when a new resource is freed, the request positioned in the head of the waiting list is assigned to it.

An adequate estimate of how the user's happiness  level and the user's tolerance curves behave is very important for the evaluation of the proposed strategies. In our implementation of~\PaS and in our computational evaluation, we employed the definition of  user's patience suggested by Brown \emph{et al.}~\cite{brown2005statistical}. Patience is hence given by the ratio of the time a user expects to wait  for results to the time the user actually waits for them:
\[
    \text{Patience} = \frac{\text{Expected Response Time}}{\text{Actual Response Time}}
\]

\subsection{Expectation-aware Scheduling}

\EaS has the goal of serving first  requests associated to users whose response time expectations are translated into ``soft'' deadlines that are positioned earlier in time. The difference between \EaS and traditional deadline-based algorithms lies in the nature of the ``buffer'' adding to the minimum response time, as it changes over time and is related to users' patience. 

\EaS sorts service requests in the waiting queue according to their users' expectation, given by: 
\[
    \text{Expectation} = \text{arrival time} + \text{expected response time},
\]
where \emph{arrival time} is the time at which the job arrived on the waiting queue and \emph{expected response time} is the time that the service provider need to complete the task. \EaS, then, schedules the job with the least \emph{expectation} when a new resource is freed.

\subsection{Analytical Investigation of the \EaS Strategy}

In this section, we analyse \EaS from a theoretical point of view by comparing it with \FIFO. The notation used throughout this section is presented in Table~\ref{tab:list-of-symbols} and explained in more details in the section below.

\subsubsection{Notation.}

\begin{table}
\begin{center}
\caption{List of symbols.}
\begin{tabular}{r|l}
    $\mathcal{T}$       & sequence of arriving tasks \\
    $t $                & task in $\mathcal{T}$ \\
    $a(t)$              & arrival time of task $t$\\
    $s(t)$              & time at which task~$t$ starts to be processed \\
    $r(t)$              & response time for~$t$\\
    $\mathcal{U}$       & set of users \\
    $u$                 & user in $\mathcal{U}$ \\
    $e(u,t)$              & time difference between user $u$'s expectation and real response time\\
    $\Delta(t)$         & processing time of task $t$\\
    $u(t)$              & user that submitted task $t$\\
    $w(u)$            & time tolerance that $u$ has for the results of some service\\
    $m$              & parallelism capacity of service provider \\
    $h(u)$            & user $u$'s level of happiness with service provider\\
    $c(u)$              & minimum level of happiness at which user $u$ still uses a service\\
    $i(u,e(u,t))$  & variation of $h(u)$ according to $e(u,t)$\\
    $j(u,t,e(u,t))$     & variation of $w(u, t)$ according to $e(u,t)$\\
    $Z$                 & closed interval $[0,1] \in \mathbb{R}$\\
    $Z^{|\mathcal{U}|}$ & user happiness state space\\
\end{tabular}
\end{center}
\label{tab:list-of-symbols}
\end{table}

Let~$\mathcal{U}$ be the set of users of a service provider.
Let~$\mathcal{T}$ denote the sequence of job requests being submitted, 
where each $t \in \mathcal{T}$ arrives at time
$a(t) \in \mathbb{R}^{+}$ and has processing time $\Delta(t)
\in \mathbb{R}$. Task~$t$ is submitted by user~$u(t)$, who is expecting to wait
an amount of time $w(u(t)) \in \mathbb{N}$ in addition to~$\Delta(t)$, \emph{i.e.}, $w(u)$ 
denotes~$u$'s tolerance with response delays. 
The service provider has a dispatching algorithm responsible for the assignment  
of each incoming task to one of its~$m$  indistinguishable and non-preemptive processors.

Let us denote by $s(t) \in \mathbb{R}^{+}$ the time at which task~$t$ starts to be processed. The
 response time for task~$t$ is
given by $r(t) = (s(t) - a(t)) + \Delta(t)$, and 
$e(u(t),t) = r(t) - (\Delta(t) + w(u(t)))$ denotes 
the amount of time by which the response time differs from~$u(t)$'s original expectation.

We denote user $u$'s level of happiness by~$h(u) \in \left[0,1\right]$, a linear scale where $h(u) = 0$ and $h(u) = 1$ indicates that~$u$ is absolutely discontent and happy, respectively. We assume that~$u$ stops sending
requests  as soon as~$h(u)$ reaches a value below some critical value~$c(u)$ in~$\left[0,1\right]$.
We say that user~$u$ is active if $h(u) > c(u)$.
The impact that~$e(u,t)$ has on~$h(u)$ is formulated  by  function $i: \mathcal{U}  \times \mathbb{R} \rightarrow \mathbb{R}$, and the impact that~$e(u,t)$ has on~$w(u)$ is described by  some function $j: \mathcal{U} \times \mathcal{T}  \times \mathbb{R} \rightarrow \mathbb{R}$.
If we assume that~$i(u,e(u,t))$ and~$j(u,t,e(u,t))$ are addictive
factors, then, after the computation of some task~$t$, the happiness level of user~$u(t)$ 
 will be given by $h(u) + i(u,e(u,t))$, while $u(t)$'s patience level becomes
$w(u(t)) + j(u(t),t,e(u(t),t))$.

\subsubsection{Optimisation Criteria.}

Let $Z$ denote the closed interval~$[0,1] \subset \mathbb{R}$.  We say that a vector $s \in Z^{|\mathcal{U}|}$ denotes a service provider's  \emph{user happiness state} if $s_x = h(u_x)$ $\forall u_x \in \mathcal{U}$, $1 \leq x \leq \mathcal{U}$.  In order to evaluate and compare different scheduling strategies, we have to define a cost function $c: Z^{|\mathcal{U}|} \rightarrow \mathbb{R}$.  It is clear that the definition of a proper cost function depends on the optimisation criteria one
wants to establish.

For our theoretical analysis, we will consider two optimisation goals.
The first one is the \emph{maximisation of the overall happiness of users}, where
service providers should try to  reach states $s \in Z^{|\mathcal{U}|}$ of maximal $L^1$-norm.
The other criteria consists of the \emph{maximisation of active users}, where service providers
try to keep as many active users as possible. Formally, a state $s \in Z^{|\mathcal{U}|}$ satisfying
this second goal  is associated to a vector $s' \in Z^{|\mathcal{U}|}$  such that $s'_x =s_x$ if $s_x \geq c(u_x)$, 
 $s'_x = 0$ otherwise, and $||s'||_{0}$ is maximal.

In the following sections, we investigate two  scenarios for the problem and discuss the situations where 
these different optimisation criteria may be employed and how the proper choice of a scheduling algorithm
depends strongly on the behaviour of functions~$i$ and~$j$.

\subsubsection{Batch Requests.}

We consider initially how scheduling strategies affect the user happiness states
when we take into account a single batch of job requests. We assume here that
each user submits a single request, and therefore we do not investigate variations
of~$w(u)$.
Single batch analysis is interesting because it is the only reasonable option
when requests do not arrive in a periodic fashion and users' profiles are unknown, \emph{i.e.}, when we do not have an exact
idea about their patiences' levels and behaviours.  The
optimisation criteria in this section will be the $L^1$-norm of the user
happiness state vector.

Let us consider the family of scenarios where each task in~$\mathcal{T}$  consumes 
time~$\Delta$, and let~$t_x, t_y \in \mathcal{T}$ be such 
that $x+m < y$ and $a(t_x) + w(u_x) > a(t_y) + w(u_y)$.

If~\FIFO{}  is employed, the scheduling plan~$P$ will have each request~$t$ serviced according to the order defined by its arriving time~$a(t)$. In particular, $t_x$ will be processed before~$t_y$ according to~$P$ and in different moments
in time (i.e., they will not be serviced in parallel). 

For the same sequence~$\mathcal{T}$, because $a(t_x) + w(u_x) > a(t_y) + w(u_y)$, \EaS  would invert the order in which tasks~$t_x$ and~$t_y$ are processed, so let us consider the plan~$P'$ that is almost equal to~$P$, having only  the positions of~$t_x$ and~$t_y$ exchanged. Because all the tasks consume the same amount of time, it is clear that we can transform plan~$P$ into  plan~$P^*$ that would be generated by~\EaS  if we apply the same exchange technique sequentially until every pair of requests is positioned accordingly.

Let~$s$ and~$s'$ be the user happiness state vectors
of~$p$ after the execution of plans~$P$ and~$P'$, respectively, and let~$f_x$
and~$f_y$ be the times at which~$t_x$ and~$t_y$ have their processing tasks
finished according to plan~$P$, respectively (\emph{i.e.}, $f_x < f_y$). Let us
refer to~$e(t_x)$ and~$e(t_y)$ as~$e^1(t_x)$ and~$e^1(t_y)$ for  \FIFO, respectively, and as~$e^2(t_x)$ and~$e^2(t_y)$ for  \EaS, respectively.

Finally, let $q_{x,y}: \mathbb{R} \times \mathbb{R} \rightarrow \mathbb{R}$ be
the function parameterized by~$e(t_x)$ and~$e(t_y)$ denoting the sum of the
changes in the happiness levels of users~$u_x$ and~$u_y$ after tasks $t_x$
and~$t_y$ have been serviced, respectively. It is
clear that~$q_{x,y}$ depends on the behaviour of~$i$.

\begin{proposition}
If $q_{x,y}$ is always the same $\forall x, y \in \mathcal{U}$, is monotonic,
and respects exactly one of the following scenarios, then it is possible to
decide if either~\EaS  or~\FIFO  yields a plan
resulting in a user happiness state~$s$ with maximal $||s||_{1}$:
\begin{itemize}
\item $f_{x,y}(a,b) \geq f_{x,y}(c,d)$ whenever $|a| + |b| \geq |c| + |d|$; or 
\item $f_{x,y}(a,b) \leq f_{x,y}(c,d)$ whenever $|a| + |b| \geq |c| + |d|$; or
\item $f_{x,y}(a,b) = f_{x,y}(c,d)$ whenever $|a| + |b| \geq |c| + |d|$.
\end{itemize}
\end{proposition}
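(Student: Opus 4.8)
The plan is to reduce the comparison of \EaS against \FIFO to a single local exchange of two out-of-order tasks, and then to argue that the global ordering produced by \EaS dominates \FIFO by a telescoping (bubble-sort) argument. Concretely, I would start from the setup already fixed in the excerpt: a batch of equal-length tasks, plans $P$ (\FIFO) and $P'$ (the swap of $t_x$ and $t_y$), and the fact that $P$ can be carried to the \EaS plan $P^*$ by a finite sequence of such swaps. The first step is to compute, for the single swap, the change in $e(\cdot)$ for the two affected users. Since all tasks consume time $\Delta$ and the swap only exchanges the two finishing times $f_x$ and $f_y$, user $u_x$'s response time improves by $f_y - f_x$ and user $u_y$'s worsens by the same amount; hence $e^2(t_x) = e^1(t_x) - (f_y - f_x)$ and $e^2(t_y) = e^1(t_y) + (f_y - f_x)$, while every other user is untouched. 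This is the key computation and it is genuinely routine.

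Next I would examine how the pair $(|e(t_x)|, |e(t_y)|)$ moves under this swap. Because $a(t_x) + w(u_x) > a(t_y) + w(u_y)$ and the two tasks are $\Delta$ apart, the \EaS order is the one that makes the more-impatient user (the one whose expectation deadline is earlier) finish earlier; intuitively this shifts ``lateness'' from the user who minds it more to the user who minds it less. I would make precise the claim that, after the swap, the sum $|e^2(t_x)| + |e^2(t_y)|$ is no larger than $|e^1(t_x)| + |e^1(t_y)|$ — or, in the boundary cases, possibly larger or equal, which is exactly why the proposition splits into three regimes on $q_{x,y}$. Combining this with monotonicity of $q_{x,y}$ in the $L^1$ sense of its arguments (the three bulleted hypotheses), the swap either never decreases, never increases, or always preserves the summed happiness change $q_{x,y}$; and since $q_{x,y}$ is the same for every pair, the effect of each swap has a consistent sign.

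The final step is the telescoping argument: writing the total $L^1$-norm of the happiness state as the baseline plus the sum of per-task happiness changes, and noting that going from $P$ to $P^*$ only alters the happiness of users whose tasks were swapped, I would sum the per-swap inequalities over the whole bubble-sort sequence. In the first regime every swap weakly increases $\|s\|_1$, so \EaS is optimal; in the second every swap weakly decreases it, so \FIFO is optimal; in the third the two plans are tied. The one subtlety I would flag is that a single sequence of adjacent \EaS-swaps can touch the same user more than once (a user's task may be bypassed several times before reaching its \EaS position), so the per-swap changes must be composed, not merely added in isolation; I would handle this by applying the sign argument to the happiness state \emph{incrementally} along the swap sequence rather than comparing $P$ and $P^*$ directly. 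Establishing that each intermediate swap still satisfies the ``$a(t_x)+w(u_x) > a(t_y)+w(u_y)$ is corrected'' hypothesis — i.e., that the bubble-sort never creates a new inversion — is where the argument needs the most care, and is the main obstacle; it follows from the fact that exchanging two adjacent out-of-order elements in a sort can only reduce the inversion count.
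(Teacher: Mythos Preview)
Your overall strategy---establish the single-swap inequality on $|e(t_x)|+|e(t_y)|$, feed it into the monotonicity hypothesis on $q_{x,y}$, then telescope along the bubble-sort from $P$ to $P^{*}$---is exactly the paper's approach. The paper is terser: it observes that the four quantities $a(t_x)+\Delta+w(u_x)$, $a(t_y)+\Delta+w(u_y)$, $f_x$, $f_y$ fall into six possible orderings, checks in each that $e^{1}(t_x)+e^{1}(t_y)=e^{2}(t_x)+e^{2}(t_y)$ and $\max(e^{1}(t_x),e^{1}(t_y))>\max(e^{2}(t_x),e^{2}(t_y))$, and concludes $|e^{1}(t_x)|+|e^{1}(t_y)|\ge |e^{2}(t_x)|+|e^{2}(t_y)|$. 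Your more explicit treatment of the telescoping (tracking adjacent swaps, worrying about a user being bypassed several times, and noting that a correcting swap never creates a new inversion) is a genuine improvement over the paper, which simply asserts the passage from $P'$ to $P^{*}$.

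Two things need fixing. First, your sign is reversed: under \FIFO, $t_x$ (the earlier arrival) finishes at the earlier time $f_x$, and the swap pushes it to $f_y$, so $u_x$'s response time \emph{worsens}; the correct relations are $e^{2}(t_x)=e^{1}(t_x)+(f_y-f_x)$ and $e^{2}(t_y)=e^{1}(t_y)-(f_y-f_x)$. With your signs the pair $(e(t_x),e(t_y))$ would spread apart rather than contract, and the $|e|$-sum inequality would flip. Second, drop the aside that the $|e|$-sum might be ``possibly larger or equal'' and that this is ``exactly why the proposition splits into three regimes.'' The $|e|$-sum inequality goes only one way (weakly); the trichotomy comes entirely from whether $q_{x,y}$ is increasing, decreasing, or constant in $|a|+|b|$, which you do state correctly in the very next sentence. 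Once these are corrected your plan is complete and coincides with the paper's.
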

\begin{proof}
Simple inspection shows that $a(t_x) + \Delta + w(u_x)$, $a(t_y) + \Delta
+ w(u_y)$, $f_x$, and $f_y$ can appear in six different relative ordering
schemes (e.g., $a_y + \Delta + w(u_y) < a_x + \Delta + w(u_x) < f_x
< f_y$ is one of them)\footnote{Recall that $a(t_x)+\Delta+w(u_x)$ is
already defined as greater than $a(t_y)+\Delta+w(u_y)$.}.  Moreover, one
can also see that $e^1(t_x) + e^1(t_y) = e^2(t_x) + e^2(t_y)$ and that
$max(e^1(t_x), e^1(t_y)) > max(e^2(t_x), e^2(t_y))$ in each of these cases.
Therefore, we have  $|e^1(t_x)| + |e^1(t_y)| \geq |e^2(t_x)| + |e^2(t_y)|$.

Based on these observations and on our hypothesis, we have the following
situations:
\begin{itemize}
\item if $f_{x,y}(a,b) \geq f_{x,y}(c,d)$ whenever $|a| + |b| \geq |c| + |d|$, 
then~$c(s) \geq c(s')$;
\item if $f_{x,y}(a,b) \leq f_{x,y}(c,d)$ whenever $|a| + |b| \geq |c| + |d|$,
then~$c(s) \leq c(s')$; and
\item if $f_{x,y}(a,b) = f_{x,y}(c,d)$ whenever $|a| + |b| \geq |c| + |d|$,
then~$c(s) = c(s')$.
\end{itemize}
Therefore, $P'$ is better than, equal to, or worse than~$P$
if~$f_{x,y}$ has the first, the second, or the third property, respectively.

Finally, if we assume that $f_{x,y}$ is always the same $\forall x, y$
in~$\mathcal{U}$, the resulting user happiness
state associated to~$P^*$ is better than, equal to, or worse than~$P$
if~$f_{x,y}$ has the first, the second, or the third property, respectively. \qed
\end{proof}

\subsubsection{Periodic Requests.}

This section considers the evolution of the user happiness states 
in scenarios where job request arrive periodically in the service provider. 
In this case, the
effects of~$j$ are relevant and we will assume that the maximisation of the
$L^0$-norm of the user happiness state vector is the optimisation goal.

Let us consider the family~$\mathcal{F}$ of scenarios where~$\mathcal{U} = 2m$
and such that every request $t \in \mathcal{T}$ consumes time~$\Delta$.
Given~$0 < \epsilon \leq \Delta$, we partition~$\mathcal{U}$ in two groups of equal size characterized
as follows:
a) users in~$\mathcal{U}_1$ submit requests at time~$3k\Delta$ and at time~$3k\Delta+\epsilon$
for every $k$ in $\mathbb{N}$; b)
 users in~$\mathcal{U}_2$ submit requests at time~$3k\Delta + 4\epsilon/3$ for every~$k$ in $\mathbb{N}$.

In the schedule generated by~\FIFO, all the requests submitted by users in~$\mathcal{U}_1$ 
are served before the tasks submitted by users in~$\mathcal{U}_2$.
More precisely, 
for each task $t$ submitted at time~$3k\Delta$ we have $r(t) = \Delta$,  
for each task $t$ submitted at time~$3k\Delta + \epsilon$ we have $r(t) = 2\Delta-\epsilon$, and  
for each task $t$ submitted at time~$3k\Delta + 4\epsilon/3$ we have $r(t) = 3\Delta-4\epsilon/3$.
We remark that any task submitted at time $t \geq 3k\Delta$ will only
 be serviced  after every task submitted at some time $t < 3k\Delta$ has been already processed.
Finally, let us denote by~$h_k(u)$ and by~$w_k(u)$ the level of happiness and the waiting time of user~$u$
after the~$k$-th step of the scenarios described above.

\begin{proposition}\label{badFIFO}
There is a family of scenarios where a service provider that starts with~$2m$ users 
is able to keep only~$m$ of them active if~\FIFO is employed, while~\EaS  would allow it to keep all the~$2m$ users.
\end{proposition}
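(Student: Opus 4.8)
The plan is to realise the claimed family by instantiating the periodic scenarios $\mathcal{F}$ described above with an explicit choice of tolerances and impact functions, and then to run \FIFO and \EaS side by side on it. Concretely, I would fix any $\epsilon$ with $0<\epsilon<\tfrac34\Delta$; set $w(u)=2\Delta$ for every $u\in\mathcal{U}_1$ and $w(u)=\Delta$ for every $u\in\mathcal{U}_2$; take $j\equiv 0$, so tolerances never change and both schedules are periodic with period $3\Delta$; start every user at $h(u)=1$ with critical value $c(u)=\tfrac12$; and let $i(u,e)=-\beta\max\{e,0\}$ for a fixed $\beta>0$, so that a response no slower than expected leaves happiness unchanged while a slower-than-expected response strictly lowers it.

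For the \FIFO side I would use the schedule already computed above: in each period the $m$ tasks released at $3k\Delta$ get response time $\Delta$, the $m$ tasks of $\mathcal{U}_1$ released at $3k\Delta+\epsilon$ get $2\Delta-\epsilon$, and the $m$ tasks of $\mathcal{U}_2$ released at $3k\Delta+4\epsilon/3$ get $3\Delta-4\epsilon/3$ --- the underlying fact being that all $m$ processors are idle at every time $3k\Delta$. Substituting into $e=r-(\Delta+w(u))$ gives $e\in\{-2\Delta,\,-\Delta-\epsilon\}$ for the $\mathcal{U}_1$ tasks (so $i=0$) and $e=\Delta-\tfrac{4\epsilon}{3}>0$ for each $\mathcal{U}_2$ task, by the choice $\epsilon<\tfrac34\Delta$. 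Hence every $\mathcal{U}_1$ user keeps $h=1$, while every $\mathcal{U}_2$ user loses the fixed amount $\beta(\Delta-\tfrac{4\epsilon}{3})>0$ per period; after finitely many periods all $m$ of them simultaneously reach $h\le\tfrac12$ and go inactive. From that point on only $\mathcal{U}_1$ tasks remain, they still see $r\in\{\Delta,\,2\Delta-\epsilon\}$ and $e\le 0$, so those $m$ users stay active forever --- i.e.\ \FIFO stabilises with exactly $m$ active users.

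For the \EaS side I would show that the two strategies differ at exactly one scheduling decision per period. When $m$ processors free up at time $3k\Delta+\Delta$, both the $\mathcal{U}_1$ batch from $3k\Delta+\epsilon$ and the $\mathcal{U}_2$ batch from $3k\Delta+4\epsilon/3$ are queued, with expectations $(3k\Delta+\epsilon)+\Delta+2\Delta$ and $(3k\Delta+\tfrac{4\epsilon}{3})+\Delta+\Delta$; since $\tfrac{4\epsilon}{3}+\Delta<\epsilon+2\Delta$, \EaS picks the $\mathcal{U}_2$ batch. This swaps the last two batches of the period, giving $\mathcal{U}_1$ tasks response times $\Delta$ and $3\Delta-\epsilon$ and $\mathcal{U}_2$ tasks response time $2\Delta-\tfrac{4\epsilon}{3}$, while keeping the ``all processors idle at $3k\Delta$'' invariant. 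Now $e\in\{-2\Delta,\,-\epsilon\}$ for $\mathcal{U}_1$ and $e=-\tfrac{4\epsilon}{3}$ for $\mathcal{U}_2$, all $\le 0$, so under \EaS no happiness ever decreases: all $2m$ users remain at $h=1>\tfrac12$ for every $k$. Together with the previous paragraph this proves the statement.

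The real work is not conceptual but consists of choosing $w(\mathcal{U}_1)$, $w(\mathcal{U}_2)$ and $\epsilon$ so that several inequalities hold simultaneously: the expectation comparison $\tfrac{4\epsilon}{3}+w(\mathcal{U}_2)<\epsilon+w(\mathcal{U}_1)$ that makes \EaS reorder the batches, the inequalities $r-(\Delta+w(u))\le 0$ for every task under \EaS, and the inequality $r-(\Delta+w(u))>0$ for the $\mathcal{U}_2$ tasks under \FIFO (which is precisely what forces $\epsilon<\tfrac34\Delta$). The remaining subtlety is to confirm that the periodic invariant ``all $m$ processors are idle at time $3k\Delta$'' survives in both schedules, in particular across the period in which the $\mathcal{U}_2$ users drop out under \FIFO; once that is checked, everything else is routine arithmetic.
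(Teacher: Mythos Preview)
Your proposal is correct, and the arithmetic checks out: under \FIFO the $\mathcal{U}_2$ tasks get $e=\Delta-\tfrac{4\epsilon}{3}>0$ each period (exactly where $\epsilon<\tfrac34\Delta$ is used), under \EaS the expectation comparison $\tfrac{4\epsilon}{3}+\Delta<\epsilon+2\Delta$ forces the swap, and the ``all processors idle at $3k\Delta$'' invariant is preserved in both schedules as well as across the period where $\mathcal{U}_2$ drops out.

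Your route is genuinely simpler than the paper's. The paper keeps the tolerances dynamic: $w(u)$ is taken to be the moving average of the user's last $b$ waiting times, and $i$ is a step function that drops happiness by exactly $(h_0(u)-c(u))/(b-1)$ per bad period. With $w$ evolving, the paper must then argue (via a short inequality chain) that under \FIFO the $\mathcal{U}_2$ tolerance cannot catch up to the actual wait $2\Delta-\tfrac{4\epsilon}{3}$ within the first $b-1$ periods, so the happiness hits $c(u)$ before the system equilibrates; and conversely that under \EaS the tolerances drift to a fixed point at which the same reordering keeps being chosen. By freezing $j\equiv 0$ and using a linear penalty you collapse all of this to a single period's arithmetic plus an immediate induction on $k$. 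What the paper's version buys is consistency with the adaptive user model used elsewhere (and in the companion Proposition~\ref{badPbH}); what your version buys is a cleaner, more transparent proof of the bare statement.
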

\begin{proof}
Let us consider the instances of family~$\mathcal{F}$ for which~$w(u)$ 
is the average of user~$u$'s last~$b$ waiting times $r(t) - \Delta$ 
(completing with $b-k$ values~$\Delta$ whenever $k \leq b$)
for every user~$u$ in~$\mathcal{U}$, $w(u) = 2\Delta$ for $u \in \mathcal{U}_1$, 
$w(u) = \Delta$ for $u \in \mathcal{U}_2$, and  function $i$ is such that
\begin{equation*}
    i_{k+1}(u,e(u,t))=
    \begin{cases}
      0 , & \text{if}\ e(u,t) \leq  \alpha_x \\
       -(h_0(u)-c(u)  )/(b-1), & \text{otherwise}
    \end{cases}
  \end{equation*}
where $\alpha_x = \Delta$ if $u_x \in \mathcal{U}_1$ and
$\alpha_x = 0$ if $u_x \in \mathcal{U}_2$. Finally, let us also assume that $\epsilon < 3\Delta/4$.

For~$u$ in~$\mathcal{U}_1$, we have
 $e(u,t) = -2\Delta$ for each task~$t$ submitted at time~$3k\Delta$, while
$e(u,t) = -\Delta-\epsilon$ for each task~$t$ submitted at time $3k\Delta+\epsilon$. The value of $w(u)$
converges to $(\Delta-\epsilon)/2$ for these users, and as $(2\Delta - \epsilon) - \Delta - (\Delta-\epsilon)/2 = 
(\Delta-\epsilon)/2 < \Delta$, it follows from the definition of~$i$  that~$h(u)$
remains constant for $u \in \mathcal{U}_1$, as $i_{k}(u,e(u,t))$ will clearly be equal to~$0$ for every $k \in [0,b]$.

In the case of users in~$\mathcal{U}_2$,  
as~$w(u)$ is given by the average of the last~$b$ waiting times, if $k < b$, we have
\begin{eqnarray*}
k(2\Delta - 4\epsilon/3) + (b-k)\Delta &<& b(2\Delta - 4\epsilon/3) \\
(b-k)\Delta &<& (b-k)(2\Delta -4\epsilon/3) \\
\Delta &<& 2\Delta - 4\epsilon/3\\
\epsilon &<& 3\Delta/4,
\end{eqnarray*}
\emph{i.e.}, $w(u) < (2\Delta - 2\epsilon)$ until the $b$-th iteration if $\epsilon < 3\Delta/4$.
Moreover, after~$b-1$ iterations, we will have
\begin{eqnarray*}
h_{b-1}(u) &=& h_{0}(u) + (b-1)(c(u)-h_0(u))/(b-1) \\
&=& h_{0}(u) + c(u)-h_0(u) \\
&=& c(u).
\end{eqnarray*}
It follows that if $\epsilon < 3\Delta/4$, then each user~$u$ in~$\mathcal{U}_2$
will stop sending requests to~$p$ after the $(b-1)$-th iteration.

In the schedule generated by~\EaS, 
tasks of users in~$\mathcal{U}_2$ submitted at time~$3k\Delta + 4\epsilon/3$ are serviced just after
the tasks of users in~$\mathcal{U}_1$
submitted at time~$3k\Delta$.

More precisely, 
for each task $t$ submitted at time~$3k\Delta$ we have $r(t) = \Delta$,  
for each task $t$ submitted at time~$3k\Delta + \epsilon$ we have $r(t) = 3\Delta-\epsilon$, and  
for each task $t$ submitted at time~$3k\Delta + 4\epsilon/3$ we have $r(t) = 2\Delta-4\epsilon/3$.

Under these circumstances, the value of~$h(u)$ for $u \in \mathcal{U}_2$ does not change, as~$w(u)$ decreases monotonically  
from~$\Delta$ to~$\Delta - 4\epsilon/3$ as~$e(u,t)$ converges from $-4\epsilon/3$ to~$0$
over the course of the first~$b$ iterations.

For each user~$u$ in~$\mathcal{U}_1$, 
$h(u)$ do not change, 
as~$w(u)$ decreases monotonically  
from~$2\Delta$ to~$\Delta - \epsilon/2$ and $(3\Delta - \epsilon) - \Delta -  \Delta + \epsilon/2 = \Delta-\epsilon/2 < \Delta$.

Finally, after the first~$b$ iterations, $w(u) = \Delta - 4\epsilon/3$ for $u \in \mathcal{U}_2$ and
$w(u) = \Delta - \epsilon/2$ for $u \in \mathcal{U}_1$. Tasks from~$\mathcal{U}_1$ submitted at time
$3k\Delta$ will continue as the first ones to be serviced. Users from~$\mathcal{U}_1$ submitting tasks
at time $3k\Delta + \epsilon$ will be unsatisfied if they do not get the response by time $3k\Delta + \epsilon + \Delta - \epsilon/2 + \Delta = 3k\Delta + 2\Delta + \epsilon/2$, while
users from~$\mathcal{U}_1$ submitting tasks
at time $3k\Delta + 4\epsilon/3$ will be unsatisfied if they do not get the response by time $3k\Delta + 4\epsilon/3 + \Delta - 4\epsilon/3 + \Delta = 3k\Delta + 2\Delta$. Therefore, \EaS will keep  using
the same ordering that it used in the first iteration once the system reaches its equilibrium

It follows  that there are scenarios for which a service provider employing~\FIFO may only be able to keep~$m$ users where~\EaS would allow it
to reach the equilibrium having all the~$2m$ users active. \qed
\end{proof}

We remark that one can modify~$\mathcal{F}$ and the proof 
above in order to show that a service provider having initially~$qm$ users
$\forall q \in \mathbb{N}$ and computing requests of different processing times
may eventually finish with only~$m$ active users after a finite number of iterations.

\begin{proposition}\label{badPbH}
There is a family of scenarios for which  a service provider that starts with~$2m$ users 
is able to keep only~$m$ of them active if~\EaS is employed, while~\FIFO  would allow it to keep all the~$2m$ users.
\end{proposition}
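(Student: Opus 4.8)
The plan is to mirror the construction behind Proposition~\ref{badFIFO}. I would keep the \emph{same} family~$\mathcal{F}$ --- the same $2m$ users split into~$\mathcal{U}_1$ and~$\mathcal{U}_2$, the same periodic arrival pattern, and the same patience-adaptation rule, initial patience values and parameters as in the proof of Proposition~\ref{badFIFO} --- and change \emph{only} the happiness-variation function~$i$, so that the \EaS and \FIFO schedules, together with the response times $r(t)$ they induce, are exactly the ones already computed there. The lever to pull is the asymmetry that proof already exhibits: under~\EaS the second-wave tasks of~$\mathcal{U}_1$ (those submitted at time $3k\Delta+\epsilon$) receive response time $3\Delta-\epsilon$, strictly more than the $2\Delta-\epsilon$ they receive under~\FIFO, because \EaS overcharges~$\mathcal{U}_1$ precisely in order to serve the impatient users of~$\mathcal{U}_2$ sooner. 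The idea is to make~$\mathcal{U}_1$, rather than~$\mathcal{U}_2$, the churn-prone group.

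Concretely I would keep the threshold shape of~$i$ but move the thresholds:
\[
    i_{k+1}(u,e(u,t))=
    \begin{cases}
      0, & \text{if}\ e(u,t) \leq \beta_x \\
      -(h_0(u)-c(u))/(b-1), & \text{otherwise,}
    \end{cases}
\]
now with a \emph{small} tolerance $\beta_x = (3\Delta-2\epsilon)/4$ for $u_x\in\mathcal{U}_1$ and a \emph{large} one $\beta_x = \Delta$ for $u_x\in\mathcal{U}_2$. The proof of Proposition~\ref{badFIFO} already records how $e(u,t)$ evolves along both schedules: for the second-wave tasks of~$\mathcal{U}_1$ one has $e(u,t)\to(\Delta-\epsilon)/2$ under~\FIFO but $e(u,t)\to\Delta-\epsilon/2$ under~\EaS, while for~$\mathcal{U}_2$ one has $e(u,t)\leq\Delta-4\epsilon/3$ under~\FIFO and $e(u,t)\leq 0$ under~\EaS (the first-wave $\mathcal{U}_1$ tasks, with $r(t)=\Delta$, always have $e(u,t)<0$). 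Since $\Delta>0$ we get $(\Delta-\epsilon)/2<(3\Delta-2\epsilon)/4<\Delta-\epsilon/2$, so: under~\FIFO every task of every user stays at or below its threshold, nobody is ever penalised, and all $2m$ users remain active; under~\EaS, once $w(u)$ for $u\in\mathcal{U}_1$ has settled at $\Delta-\epsilon/2$, each second-wave task of~$\mathcal{U}_1$ has $e(u,t)=\Delta-\epsilon/2>\beta_x$, so the penalty fires once per period and after $b-1$ firings $h(u)=c(u)$ --- all $m$ users of~$\mathcal{U}_1$ abandon the service --- whereas~$\mathcal{U}_2$ is never penalised under~\EaS. Hence the provider is left with exactly $m$ active users under~\EaS and all $2m$ under~\FIFO, which is the claimed statement.

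The step needing the most care --- the direct analogue of the ``$w(u)<2\Delta-4\epsilon/3$ until the $b$-th iteration'' computation in the proof of Proposition~\ref{badFIFO} --- is controlling the adaptive patience $w(u)$ for $u\in\mathcal{U}_1$ during the transient, before its window of recent waits has filled up. I would need two facts: under~\FIFO, $w(u)$ never drops below $(\Delta-\epsilon)/2$ at a second-wave evaluation, so $e(u,t)$ never rises past~$\beta_x$; and under~\EaS, $w(u)$ reaches and then holds $\Delta-\epsilon/2$ for at least $b-1$ consecutive periods, so the countdown on $h(u)$ runs to completion uninterrupted. Both follow from a one-line check of the windowed average: at a second-wave evaluation the zero and non-zero stationary waits ($\{0,\Delta-\epsilon\}$ under~\FIFO, $\{0,2\Delta-\epsilon\}$ under~\EaS) have entered the window in balanced pairs while the remaining slots still hold the large initial padding, so the average stays at or above its stationary value throughout, and it equals that value exactly once the window is full --- taking the averaging window~$b$ even removes the only residual oscillation. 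Finally, just as in the remark following Proposition~\ref{badFIFO}, the construction extends to a provider that starts with $qm$ users for any $q\in\mathbb{N}$.
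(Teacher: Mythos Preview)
Your construction is correct and takes a genuinely different route from the paper's. Both proofs exhibit a scenario in which $\mathcal{U}_1$ becomes the churn-prone group under \EaS (because \EaS delays its second-wave tasks in favour of $\mathcal{U}_2$), but the mechanisms differ. The paper modifies several ingredients at once: it sets the threshold in $i$ to $0$ for everyone, makes the penalty denominators asymmetric ($b-1$ for $\mathcal{U}_1$ versus $b+1$ for $\mathcal{U}_2$), resets the initial patiences to $3\Delta/2$ and $\Delta$, shifts the $\mathcal{U}_2$ arrival offset from $4\epsilon/3$ to $2\epsilon$, and tightens the range to $\epsilon<\Delta/2$; the asymmetric denominator is precisely what lets $\mathcal{U}_2$ survive the \FIFO transient even though it is penalised in each of the first $b-1$ rounds. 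You instead keep the entire setup of Proposition~\ref{badFIFO} and move only the thresholds $\beta_x$, placing the $\mathcal{U}_1$ threshold strictly between the two limiting values $(\Delta-\epsilon)/2$ and $\Delta-\epsilon/2$ that the proof of Proposition~\ref{badFIFO} already computed for the second-wave $\mathcal{U}_1$ tasks under \FIFO and under \EaS respectively. This buys you a wholesale reuse of the schedule and response-time calculations; the price is the transient bound $w(u)\ge(\Delta-\epsilon)/2$ under \FIFO (and its \EaS counterpart), which you handle via the balanced-pairs observation and the parity restriction on $b$. Either approach proves the proposition; yours is the more parsimonious perturbation of Proposition~\ref{badFIFO}, while the paper's avoids having to reason about the window average during the transient by absorbing that slack into the penalty rate instead.
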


\begin{proof}
Let us consider instances of family~$\mathcal{F}$ that are similar to the ones
considered in the proof of Proposition~\ref{badFIFO} with the exception of
function~$i$, which is assumed here to be 
\begin{equation*}
    i_{k+1}(u,e(u,t))=
  \begin{cases}
      0, & \text{if}\ e(u,t) \leq  0 \\
       -(h_0(u)-c(u)  )/\beta, & \text{otherwise,}
    \end{cases}
  \end{equation*}
where $\beta = b-1$ for $u \in \mathcal{U}_1$ and
$\beta = b+1$ for $u \in \mathcal{U}_2$.

We also assume that $w(u) = 3\Delta/2$ for users in~$\mathcal{U}_1$, that 
$w(u) = \Delta$ for users in $\mathcal{U}_2$, that $w(u)$ is still given by the
average of last~$b$ waiting times $r(t)-\Delta$ (completing with $b-k$ values $3\Delta/2$ and~$\Delta$
whenever $k < b$ for~$u$ in~$\mathcal{U}_1$ and~$\mathcal{U}_2$, respectively), 
that users in $\mathcal{U}_2$ submit requests at time $3k\Delta + 2\epsilon$ 
instead of requesting at time $3k\Delta + 4\epsilon/3$
for every $k$ in $\mathbb{N}$,
and  that $\epsilon < \Delta/2$

Let us consider~\FIFO first.
Using arguments similar to the ones employed in Proposition~\ref{badFIFO}, one
can show that users in~$\mathcal{U}_1$ will have their happiness levels unchanged
and that~$w(u)$ will be $(\Delta-\epsilon)/2$ after the $i$-th iteration.
 
For users in~$\mathcal{U}_2$,
we can see below that~$w(u)$ will not become 
$2\Delta - 2\epsilon$ before the $b$-th iteration if we assume that $\epsilon < 2\Delta$:
\begin{eqnarray*}
k(2\Delta - 2\epsilon) + (b-k)\Delta &<& b(2\Delta - 2\epsilon) \\
(b-k)\Delta &<& (b-k)(2\Delta - 2\epsilon) \\
\Delta &<& 2\Delta - 2\epsilon\\
\epsilon &<& \Delta/2.
\end{eqnarray*}
Moreover, we also have that 
\begin{eqnarray*}
h_{b-1}(u) &=& h_{0}(u) + (b-1)(c(u)-h_0(u))/(b+1) \\
&=& 2h_{0}(u)/(b+1) +   (b-1)c(u)/(b+1).
\end{eqnarray*}
Because $h_0(u)  >c(u)$, it follows that $h_{b-1}(u)  > c(u)$. In the following
iterations, the system reaches an equilibrium (i.e., $e(u,t) = 0$ for
any task~$t$ submitted by user~$t$ if $a(t) \geq 3b\Delta)$, so we have that users in group~$\mathcal{U}_2$ will 
still active.

In the case of~\EaS, tasks submitted by users in ~$\mathcal{U}_1$ at time~$3k\Delta + \epsilon$
will have waiting time $2\Delta - \epsilon$, a value that will be superior to~$w(u)$ in the first~$b-1$ iterations
because $\epsilon < \Delta/2$. Therefore, in the $b$-th step, all the users from~$\mathcal{U}_1$ will have the
critical level of their happiness levels surpassed, and therefore they will abandon the service provider.

It follows  that there are scenarios for which a service provider employing~\EaS may only be able to keep~$m$ users where~\FIFO would allow it
to reach the equilibrium having all the~$2m$ users active. \qed

\end{proof}
 
Finally, the last result shows that it is possible (and easy) to identify families of scenarios where several groups of
users submit requests and eventually only~$m$ users will remain if we employ 
either~\FIFO or~\EaS.

\begin{proposition} 
There are scenarios for which both~\EaS and~\FIFO  will lead
to a situation where only~$m$ costumers will keep using the service.
\end{proposition}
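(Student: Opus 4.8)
The plan is to combine the two families of scenarios used for Propositions~\ref{badFIFO} and~\ref{badPbH} into a single family in which \FIFO mistreats one block of~$m$ users and \EaS mistreats a \emph{different} block of~$m$ users. Concretely, I would again work inside~$\mathcal{F}$ with $|\mathcal{U}| = 2m$ and $\mathcal{U} = \mathcal{U}_1 \sqcup \mathcal{U}_2$, $|\mathcal{U}_1| = |\mathcal{U}_2| = m$, all tasks of length~$\Delta$, and fix the arrival offsets so that within each period \FIFO serves the block~$\mathcal{U}_1$ ahead of~$\mathcal{U}_2$ while \EaS, ordering by the expectation values, serves a task of~$\mathcal{U}_2$ ahead of the late task of each user in~$\mathcal{U}_1$ --- exactly the order inversion already exploited, separately, in the two earlier proofs. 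The new ingredient is to tune the tolerances~$w(u)$ and the impact function so that \emph{both} inversions are harmful at the same time.

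First I would pick the tolerances~$w(u)$ on~$\mathcal{U}_1$ and~$\mathcal{U}_2$ and the arrival offsets, with~$\epsilon$ chosen small enough that the constraints of both Propositions~\ref{badFIFO} and~\ref{badPbH} hold simultaneously, and recompute the response times exactly as in those proofs: under \FIFO the users of~$\mathcal{U}_2$ experience a strictly positive~$e(u,t)$ on their late-served tasks while the users of~$\mathcal{U}_1$ keep a non-positive~$e(u,t)$, and under \EaS the roles are reversed, with~$\mathcal{U}_1$ now on the losing side. Using the same ``average of the last~$b$ waiting times'' update for~$w(u)$, and the monotone-convergence argument already employed in Proposition~\ref{badFIFO}, I would check that the mistreated block's~$e(u,t)$ stays positive for at least~$b-1$ iterations under the policy in question. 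Then I would choose the (iteration- and user-indexed) impact function~$i_k$ so that a positive~$e(u,t)$ costs the user exactly $(h_0(u)-c(u))/(b-1)$ at that step, for whichever block is the victim under the policy actually being run. After~$b-1$ such decrements the victim block reaches its critical level~$c(u)$ and goes inactive, exactly as in Proposition~\ref{badFIFO}, while the other block keeps a constant happiness level and survives into the periodic equilibrium. Hence \FIFO ends with only the~$m$ users of~$\mathcal{U}_1$ active and \EaS ends with only the~$m$ users of~$\mathcal{U}_2$ active, which is the claim.

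The main obstacle is the bookkeeping: one must exhibit a \emph{single} choice of arrival offsets, tolerances, window size~$b$, and impact function for which the \FIFO schedule behaves as in Proposition~\ref{badFIFO} \emph{and} the \EaS schedule behaves as in Proposition~\ref{badPbH}, i.e.\ one has to verify that the order inversion is stable under both policies, that neither the averaging update nor the transient before the window fills ever pushes the victim block's~$e(u,t)$ back to zero prematurely, and that no block is a victim under \emph{both} policies, which would instead drive a service down to~$0$ users. Taking~$\epsilon$ sufficiently small and~$b$ sufficiently large makes the required strict inequalities go through; and if the single-instance arithmetic proves too delicate, the same effect is obtained by staging the two constructions in two widely separated time windows --- in the first, \FIFO loses its~$m$ victims while \EaS is unharmed; in the second, \EaS loses its~$m$ victims while \FIFO, now running with only its~$m$ survivors on~$m$ processors, trivially keeps all of them. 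Either way no genuinely new idea beyond combining the two earlier constructions is needed. \qed
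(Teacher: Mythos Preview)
Your proposal aims to splice the two earlier constructions so that \FIFO and \EaS each lose a \emph{different} block of~$m$ users. That would indeed prove the statement, and your ``staged'' fallback (run the Proposition~\ref{badFIFO} instance first, then the Proposition~\ref{badPbH} instance much later) is a legitimate way to avoid the parameter-juggling you rightly flag as the obstacle. So the plan is workable.

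The paper, however, takes a much simpler route: it builds a single family in which \FIFO and \EaS produce the \emph{same} schedule, so the \emph{same} block~$\mathcal{U}_2$ is driven out under both. The trick is to make the scheduling decision vacuous. With period~$2\Delta$, only two batches of~$m$ tasks arrive per period: the~$\mathcal{U}_1$ batch at time~$2k\Delta$ finds all~$m$ processors idle and is dispatched immediately under any work-conserving policy; the~$\mathcal{U}_2$ batch at time~$2k\Delta+\epsilon$ then finds all processors busy and must wait. Since the system is non-preemptive, \EaS never gets to reorder anything, and the analysis of~$\mathcal{U}_2$'s demise is then lifted verbatim from Proposition~\ref{badFIFO}. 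Your construction buys a stronger qualitative statement (the two policies can fail on disjoint user sets), at the cost of either delicate simultaneous inequalities or a two-phase scenario; the paper's construction buys a one-paragraph proof by removing the scheduler's freedom altogether. This ``no real choice'' idea is the insight you are missing.
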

\begin{proof}
Let us consider the family~$\mathcal{F}$ of scenarios where~$2m$ users submit requests to a service provider
and  every request $t \in \mathcal{T}$ consumes time~$\Delta$.
Given~$0 < \epsilon \leq \Delta$, we partition~$\mathcal{U}$ in two groups of equal size:
a) users in~$\mathcal{U}_1$ submit requests at time~$2k\Delta$ for every $k$ in $\mathbb{N}$; and
b) users in~$\mathcal{U}_2$ submit requests at time~$2k\Delta +\epsilon$ for every~$k$ in $\mathbb{N}$.

Let us consider instances of family~$\mathcal{F}$, and let us assume that~$w(u)$ 
is given by the average of user~$u$'s last~$b$ waiting times $t(t) - \Delta$ 
(completing with $b-k$ values~$\Delta$ whenever $k \leq b$)
for every user~$u$ in~$\mathcal{U}$. Let us assume that $w(u) = 2\Delta$ for $u \in \mathcal{U}_1$, 
that $w(u) = 0$ for $u \in \mathcal{U}_2$, and that function $i$ is such that
\begin{equation*}
    i_{k+1}(u,e(u,t))=
    \begin{cases}
      0 , & \text{if}\ e(u,t) \leq  \alpha_x \\
       -(h_0(u)-c(u)  )/(b-1), & \text{otherwise}
    \end{cases}
  \end{equation*}
where $\alpha_x = \Delta$ if $u_x \in \mathcal{U}_1$ and
$\alpha_x = 0$ if $u_x \in \mathcal{U}_2$.

It is clear that tasks from users in~$\mathcal{U}_1$ will always be served as soon as they arrive
while tasks from users in~$\mathcal{U}_2$ will have $2\Delta - \epsilon$ as response time
if we use either~\FIFO or~\EaS, 
As we already showed in the proof of Theorem~\ref{badFIFO}, the happiness level of users
in~$\mathcal{U}_2$ in this case will get below the critical point before~$w(u)$ becomes $2\Delta - \epsilon$,
so these users will abandon the system.

Therefore, we conclude that, in the worst case, both~\FIFO and~\EaS  will
leave a service provider with only~$m$ active users.  \qed
\end{proof}
The families of instances described in the proof above represent the typical worst-case scenario
for online job scheduling, where decisions that are taken at a certain point in time may lead to
bad situations that cannot be modified in non-preemptive systems.


\section{Evaluation}
\label{sec:evaluation}

\newcommand{\maxNumberUsers}{100\xspace}
\newcommand{\thinkTime}{100 seconds\xspace}
\newcommand{\jobSize}{10 seconds\xspace}
\newcommand{\ewmaWindowSize}{20\xspace}
\newcommand{\ewmaAlpha}{0.8}
\newcommand{\windowToDeleteOutliers}{4\xspace}
\newcommand{\expectationMax}{60 seconds\xspace}
\newcommand{\expectationMin}{40 seconds\xspace}

In addition to the analytical investigation provided beforehand, this section presents simulation results that evaluate the performance of the scheduling strategies under different workload conditions.

\subsection{Environment Setup}

A discrete event simulator built in house was used to evaluate the performance of the scheduling strategies. To model the load of a Cloud service, we crafted three types of workloads with variable numbers of users over a 24-hour period as shown in Figure~\ref{fig:workloads}. The rationale behind the workloads is described as follows:

\begin{figure*}[!ht]
\centerline{\includegraphics[width=0.9\textwidth]{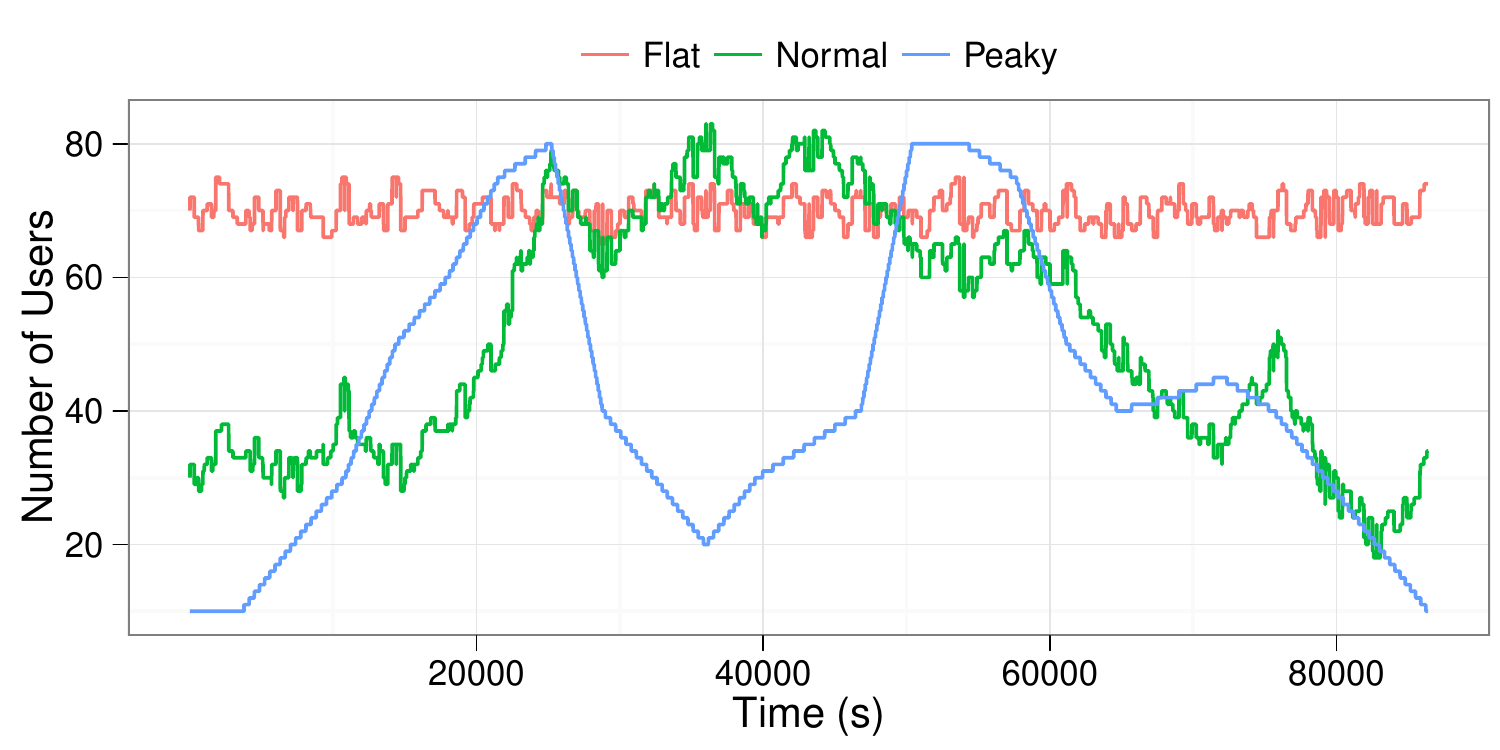}}
\vspace{5mm}
\caption{The three types of workloads.}
\label{fig:workloads}
\end{figure*}

\begin{itemize}

\item \emph{Normal day}: consists of small peaks of utilisation during the start, middle, and end of work hours reflecting the time when users check their e-mails and websites, for example. Outside these intervals, but still in work hours, this workload remains around the peak values, while outside the working hours it goes down significantly.

\item \emph{Flat day}: consists of a flat number of users during the whole period. Although unrealistic in most real-world environments, this load is used to evaluate a scenario with near constant load.

\item \emph{Peaky day}: consists of tipping workload peaks, a configuration that is realistic and reflects the situation where impacting news reach the outside world, causing users to access a service more often. The configuration is used to test the solution's behaviour handling stress situations. 

\end{itemize}

For each workload we vary the number of resources used by the Cloud service, thus allowing for evaluating the system under different stress levels. When using the system, a user makes a request and waits for its results before making a new request, with a think time between receiving results and making another request uniformly distributed between 0 and \thinkTime. To facilitate the analysis and comparison among the techniques, the length of jobs is constant (\jobSize). 

Previous interactions with the service are used to build a user's expectation on how the service should respond, and how quickly a request should be processed. The model that defines a user's expectation on the response time of a request uses two moving averages, (i) an Exponential Weighted Moving Average (EWMA) of the previous \ewmaWindowSize response times, with $\alpha = \ewmaAlpha$; and (ii) an average of the past \windowToDeleteOutliers response times, used to eliminate outliers. When a request completes, if the response time is 30\% below the average of the past \windowToDeleteOutliers response times, then the EWMA is not updated, though the value is considered in future iterations. In essence, this model states that the user expects the service to behave similar to previous interactions, with a higher weigh to more recent requests. Even though changes in response time affect the user's perception of the service, she disregards large deviations in service quality; unless they become common. As we believe that in real conditions, users would not correctly average their past response times (\textit{i.e.} they may not recall past experiences well) we add tolerance of 20\% to the estimate of response time provided by the model.

We consider that users have different levels of patience --- If you ever listened to customers' complaints in a supermarket queue in a busy city like Sao Paulo you probably know what we are referring to. Hence, a user's response time threshold---\textit{i.e.} the maximum response time that she considers acceptable---is randomly selected between \expectationMin and \expectationMax. The provider stores information on how it served previous requests made by a user and users the same model described above to compute an estimate of what it believes the user's expectation to be. \expectationMax is also what the provider considers to be the maximum acceptable response time that satisfies the service users. However, for \EaS and \PaS, if a request's response is above \expectationMax, the EWMA is updated with \expectationMin, which may give the user priority the next time she submits a request. It is a way the scheduler finds to penalise itself for yielding a response time too far from what it believes the user's expectation to be.

\subsection{Result Analysis}

\setlength{\subfigtopskip}{0pt}
\setlength{\subfigcapskip}{0pt}
\setlength{\subfigbottomskip}{1pt}

Figure \ref{fig:unhappy} depicts the Patience Indexes (as defined in Section \ref{sec:patienceaware}) of requests when below 1.0 for flat, normal, and peaky workloads. The lower the values the more unhappy the users. We observe that for high and low system load ({\it i.e.} r4--6 and r16--20), all strategies perform similarly, whereas for the other loads \PaS and \EaS produce higher Patience Indexes than \FIFO. Under high loads, most requests are completed after the expected response time, thus not allowing the scheduler to exchange the order of the requests in the waiting queue in subsequent task submissions. On the other hand, a very light system contains a short (or empty) waiting queue; hence not having requests to be sorted.

The impact of the scheduling strategies becomes evident when the system is almost fully loaded, \textit{i.e.} when the waiting queue is not empty and there are requests that can quickly be assigned to resources. In this scenario, requests with longer response time expectations can give room to tasks from impatient users. The \FIFO strategy does not explore the possibility of modifying the order of requests considering user patience.

\begin{figure}[!th]

        \centering
        \subfigure[Flat.]{
        \includegraphics[width=0.47\textwidth]{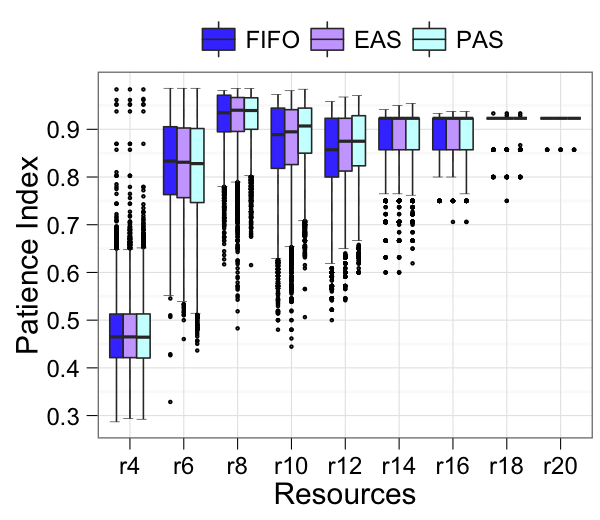}
                }
        \subfigure[Normal.]{
        \includegraphics[width=0.47\textwidth]{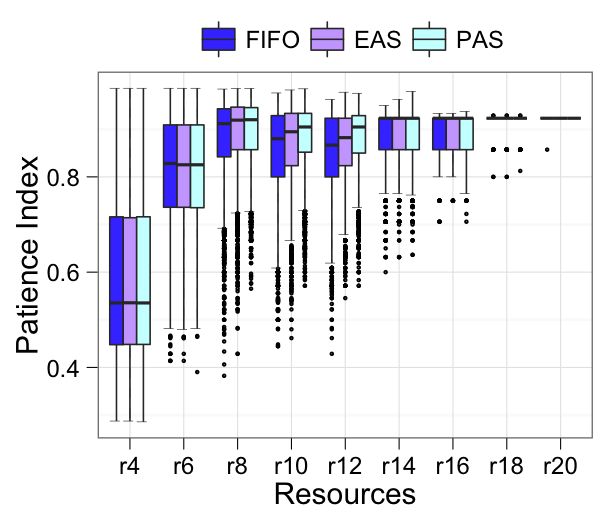}
                }
        \subfigure[Peaky.]{
        \includegraphics[width=0.47\textwidth]{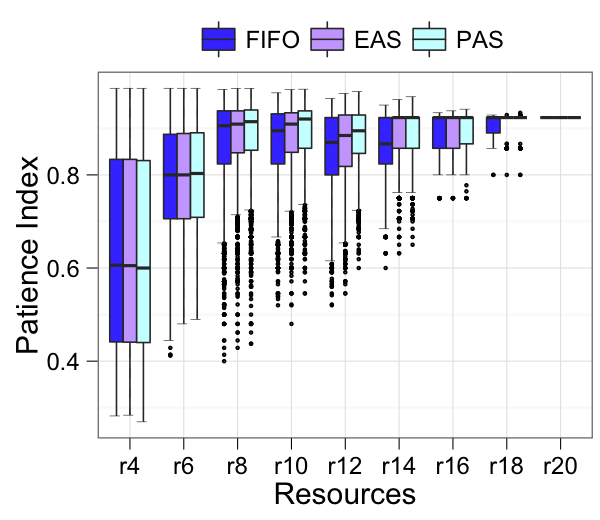}
        }
        \caption{Patience index under different workloads.}
        \label{fig:unhappy}
\end{figure}

Figure \ref{fig:number} presents the percentage of requests that were served considerably later than the expected response time, that is, when their Patience Index tends to zero. Such requests represent the stage where users' level of happiness is decreasing considerably. The percentage was normalised by the total number of requests for each resource setting for all strategies. The behaviour of this metric follows the patience indexes, but it highlights the impact of the proposed strategies have on users with very low patience levels.

\begin{figure}[!th]

        \centering
        \subfigure[Flat.]{
        \includegraphics[width=0.47\textwidth]{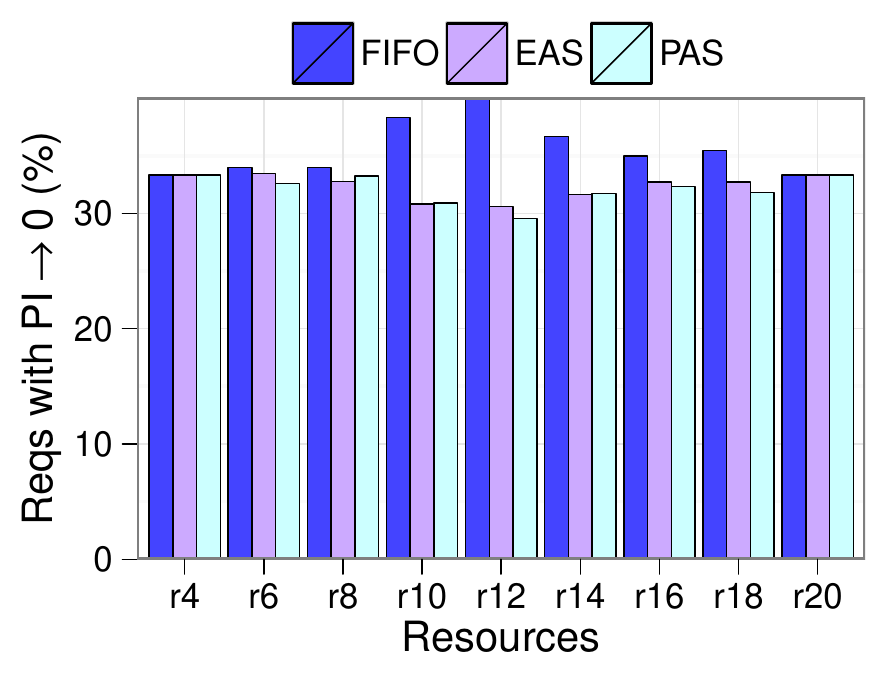}
                }
        \subfigure[Normal.]{
        \includegraphics[width=0.47\textwidth]{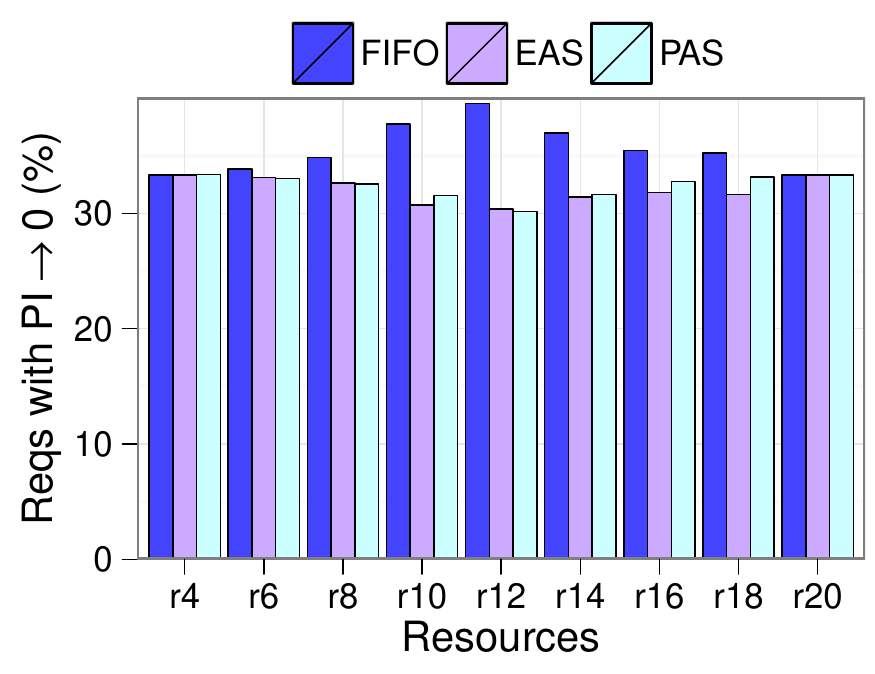}
                }
        \subfigure[Peaky.]{
        \includegraphics[width=0.47\textwidth]{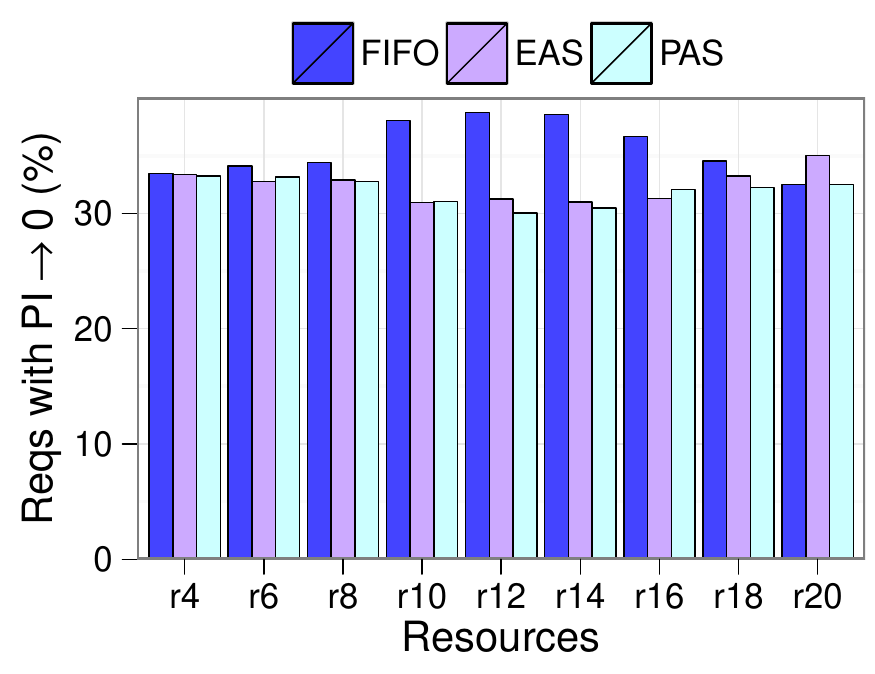}
        }
        \caption{\% of requests whose patience index tends to 0.}
        \label{fig:number}
\end{figure}


\section{Related Work}
\label{sec:related_work}

Scheduling is a well-studied topic in several domains, including resource management for clusters, grids, and more recently Cloud computing. Commonly used algorithms include First-In First-Out, priority-based, deadline-driven, some hybrids using backfilling techniques \cite{tsafrir2007backfilling}, among others \cite{braun2001comparison,feitelson1997theory}. In addition to priority and deadline, other factors have been considered, such as fairness \cite{doulamis2007fair}, energy-consumption \cite{pineau2011energy}, and context-awareness \cite{assuncao2012contextaware}. Moreover, utility functions were used to model how the importance of results to users varies over time \cite{lee2007precise,auyoung2006service} and attention scarcity was leveraged to determine priority of service requests in the Cloud \cite{netto2013leveraging}.

User behaviour has been explored for optimising resource management in the context of Web caching and page pre-fetching\cite{galletta2004website,alt2012bridging,atterer2006knowing,cunha1997determining}. The goal is to understand how users access web pages, investigate their tolerance level on delays, and pre-fetch or modify page content to enhance user experience. Techniques in this area focus mostly on web content and minimising response time of user~requests.

Service research has also investigated the impact of delays on users' behaviour. For instance, Taylor \cite{taylor1994waiting} described the concept of delays and surveyed passengers affected by delayed flights to test their hypotheses. Brown et al. \cite{brown2005statistical} and Gans et al. \cite{gans2003telephone} investigated the impact of service delays in call centres. In behavioural economics, Kahneman and Tversky \cite{kahneman1979prospect} introduced prospect theory to model how people make choices in situations that involve risk or uncertainty.


\section{Conclusions}
\label{sec:conclusion}

We presented \acf{PAS} and \acf{EAS} strategies that use estimates on users' level of tolerance or patience to define the order in which resources are assigned to requests.

We compared the \EaS with \FIFO analytically and showed that it is not trivial to choose between both algorithms. In fact, the quality of the scheduling plans they produce depends strongly on users' level of happiness with a service and tolerance to delays. Deeper analytical results will probably require a better understanding and more precise characterisation of these two aspects. Our computational evaluation shows that both \PaS and \EaS perform better than \FIFO under peak load scenarios, and that \PaS is slightly better than \EaS. 

Several aspects can be explored in future work. The~\PaS strategy works basically as a greedy algorithm, and in spite of the challenges involving the prediction of resolution times for tasks that are still in the queue, we believe that the use of more advanced data structures and/or algorithms may improve the quality of its scheduling plans.


\bibliographystyle{splncs03}
\bibliography{references}
\end{document}